\documentclass[english]{article}
\usepackage{lmodern}

\usepackage[T1]{fontenc}
\usepackage[latin9]{inputenc}
\usepackage{color}
\usepackage{babel}
\usepackage{array}
\usepackage{verbatim}
\usepackage{textcomp}
\usepackage{multirow}
\usepackage{amsthm}
\usepackage{amsmath}
\usepackage{amssymb}
\usepackage{setspace}
\usepackage[unicode=true,
 bookmarks=true,bookmarksnumbered=false,bookmarksopen=true,bookmarksopenlevel=1,
 breaklinks=true,pdfborder={0 0 1},backref=page,colorlinks=true]
 {hyperref}
\hypersetup{pdftitle={On Understanding Generics},
 pdfauthor={Moez A. Abdel-Gawad}}

\makeatletter

\providecommand{\tabularnewline}{\\}

\numberwithin{equation}{section}
\numberwithin{figure}{section}
\newcommand{\lyxaddress}[1]{
\par {\raggedright #1
\vspace{1.4em}
\noindent\par}
}
\newenvironment{lyxcode}
{\par\begin{list}{}{
\setlength{\rightmargin}{\leftmargin}
\setlength{\listparindent}{0pt}
\raggedright
\setlength{\itemsep}{0pt}
\setlength{\parsep}{0pt}
\normalfont\ttfamily}%
 \item[]}
{\end{list}}
  \theoremstyle{remark}
  \ifx\thechapter\undefined
    \newtheorem{claim}{\protect\claimname}
  \else
    \newtheorem{claim}{\protect\claimname}[chapter]
  \fi
  \theoremstyle{plain}
  \ifx\thechapter\undefined
    \newtheorem{thm}{\protect\theoremname}
  \else
    \newtheorem{thm}{\protect\theoremname}[chapter]
  \fi
 \ifx\proof\undefined\
   \newenvironment{proof}[1][\proofname]{\par
     \normalfont\topsep6\p@\@plus6\p@\relax
     \trivlist
     \itemindent\parindent
     \item[\hskip\labelsep
           \scshape
       #1]\ignorespaces
   }{%
     \endtrivlist\@endpefalse
   }
   \providecommand{\proofname}{Proof}
 \fi
\newcommand{\code}[1]{\texttt{#1}}

\pagenumbering{arabic}

\makeatother

\providecommand{\claimname}{Claim}
\providecommand{\theoremname}{Theorem}

\begin{document}

\global\long\def\dom#1{\mathcal{#1}}
\global\long\def\lang#1{\mathsf{#1}}

\global\long\def\COOP{\dom{COOP}}
\global\long\def\NOOP{\dom{NOOP}}

\global\long\def\gdom#1{G\dom{#1}}

\global\long\def\GNOOP{\gdom{NOOP}}
\global\long\def\GNOOL{\lang{GNOOL}}

\global\long\def\ext{\blacktriangleleft}
\global\long\def\subsign{\trianglelefteq}

\global\long\def\edom#1{E\gdom{#1}}

\global\long\def\EGNOOP{\edom{NOOP}}

\title{\label{cha:GNOOP}On Understanding Generics: Towards a Simple and
Accurate Domain-Theoretic Model of Generic Nominally-Typed OOP}

\author{Moez A. AbdelGawad\\
\code{moez@cs.rice.edu}}

\maketitle

\lyxaddress{\begin{center}
College of Mathematics and Econometrics, Hunan University\\
Changsha 410082, Hunan, P.R. China\\
Informatics Research Institute, SRTA-City\\
New Borg ElArab, Alexandria, Egypt\\
(Began in 2010, while graduate student at Rice University)
\par\end{center}}

\begin{singlespace}
\begin{flushright}
\emph{\scriptsize{}The Gamma function is a metaphorical extension
of the factorial function. One property, its recursion, becomes its
most important feature, and serves as the basis for extending it.
It\textquoteright s a bit like calling an automobile a `horseless
carriage', preserving its essence of carrying and removing the unnecessary
horsefulness, or like calling a railroad }{\scriptsize{}ferrovia}\emph{\scriptsize{}
in Italian or }{\scriptsize{}Eisenbahn}\emph{\scriptsize{} in German,
focusing on the fact that it\textquoteright s still a road, but one
made instead of iron.}\\
\emph{\scriptsize{}\textasciitilde{}Adapted from a quote by Columbia
Physicist Prof. Emanuel Derman}
\par\end{flushright}{\scriptsize \par}
\end{singlespace}

\medskip{}

One important value behind developing and presenting $\NOOP$~\cite{NOOP,NOOPbook,NOOPsumm,InhSubtyNWPT13,Dom4OOP14,DomThSummCOOP14,AbdelGawad2016,AbdelGawad2015,AbdelGawad2015a},
is to provide a more precise foundation on which further OOP research
can be built. $\NOOP$ provides a capacity for better mathematical
reasoning about mainstream OO languages. %

The development of $\NOOP$ was mainly motivated by the lack of a
precise model of OOP despite the prominence and domination of OOP
among mainstream programmers, and the realization that extant models
of OOP are deeply flawed, inadequate, or not enough for explaining
the typing features of mainstream programming languages, and are thus
hindering further development of these languages.\footnote{The development of the process of adding ``closures'' to the Java
programming language is a prime example of how development of mainstream
OOP languages is hindered. Also, research done on Java wildcards has
been unable to present a proof that beyond doubt convinces us of the
type safety of Java with wilcards. Also, Smith and Cartwright, in
\cite{SmithJava08}, present flaws in the current Java type system,
particularly regarding its polymorphic methods type inference algorithm.

Even further, at least as of version 1.6.0\_16, the standard Java
compiler, \code{javac}, has generics-related bugs. For example, the
following code correctly compiles with no errors on the Eclipse Java
compiler. The same code, however, generates a `\code{Compile exception: java.lang.NullPointerException}'
error message when compiled using \code{javac}
\begin{lyxcode}
class~A<T>~\{\}

class~B<T>~extends~A<C<T>\textcompwordmark{}>~\{\}~//~note~that~C<T>~is~used~inside~the~supertype~of~B<T>

class~C<T>~extends~B<T>~\{\}~~~~//~note~that~C<T>~extends~B<T>
\end{lyxcode}
(See Oracle Java compiler bug report~\cite{AbdelGawadGenericsJavacBug10}
for an updated status of this bug in \code{javac}).

We believe all such problems are due to the lack of a precise conceptual
mathematical model of mainstream OOP, particularly one that takes
nominality in consideration. Even though the problems we mention are
Java-specific problems, we believe similar problems (probably some
of which are undiscovered) do exist in other mainstream OOP languages,
given that the type systems of these languages are much similar to
that of Java.} A feature of mainstream OOP where such inadequacy and hindrance
of development are clear is `generics'.

Given that genericity is a useful feature for most OO developers,
generics (\emph{i.e.}, generic classes) are supported in most mainstream
OO languages. Generics offer OO developers with more expressive type
systems.

\section{Generics: A Summary of Developers Perspective}

The dictionary definition of the word \emph{generic} is: `Applicable
to an entire class or group (not tied to particulars, or unconstrained)'.
For example, if we purchase some generic dish soap, soap that has
no brand name on it, we know that we are buying dish soap and expect
it to help us clean our dishes, but we do not know what exact brand
(if any) will be inside the bottle itself. We can treat it as dish
soap, even though we don't really have any idea of its exact contents.

Similarly, in OOP, generic classes, usually called \emph{generics},
provide the programmers to abstract their classes over some types,
and thus to define them ``generically'', independent of the particular
instantiations of them that class users later actually use. Generics
move the decision as to what actual types to be used for some of the
types used inside the class to the \emph{usage-sites} of a class (\emph{i.e.},
are decided by the users of the class) rather than be declared and
fixed at declaration-sites (\emph{i.e.}, decided by class developers).

Generics also offer OO programmers more flexibility, given that different
type parameters of a generic class can be used at different usage
sites, even in the same program. Without generics, such a capability
could only be simulated by a cooperation between class developers
and class users, depending on OO subtyping (which offers so-called
`subtyping polymorphism') and using the so-called `generic idiom'.
Using the generic idiom is not a type-safe alternative to generics,
given that using it involves requiring class users to insert downcasts
by hand. Because they circumvent the type system, programs with downcasts
can be type unsafe (See~\cite{FJ/FGJ}).

\section{Modeling Generics}

Building on how we modeled nominal OOP in $\NOOP$, and given that
generics is a typing feature of nominal OOP that crucially depends
on nominality (as we will see shortly), it is natural to expect that
modeling generics mathematically builds on top of the typing concepts
we developed in $\NOOP$, particularly on top of $\NOOP$ signatures.

In the following, we will see that this expectation is largely true.
Compared to the modeling of nominal OOP in $\NOOP$, the modeling
of generics (or, more accurately, the modeling of `generic nominal
OOP') is about allowing type variables to be used inside signatures
to stand for (abstract over) other signature names, making the plain
names of signatures ``gain some structure'' by allowing them to
be ``applied'' to other signature names, and, finally, to allow
any valid combination of signature names to appear in any place inside
a signature where a ``plain'' (\emph{i.e.}, $\NOOP$) signature
name was allowed before (\emph{e.g.}, field signatures, and method
signatures). Combinations of signature names are formed, syntactically,
by ``applying'' a generic signature name to other combinations of
signature names\footnote{Generic signatures that take no parameters (``zeroary'' signatures)
are treated as being non-generic signatures. When used, they can be
applied only to the empty sequence of signature names.}.

Making signatures be generic (abstracting them over other signature
names, using type variables) is sometimes called ``generification''.
Using generic signatures, we define ground object signatures, which,
similar to $\NOOP$ object signatures, are paired with object records
to construct generic objects (\emph{i.e.}, objects that carry instantiations
of generic signatures as their nominal typing information).

Similar to how we constructed $\NOOP$, we then build a model of generic
nominal OOP, which we call $\GNOOP$. $\GNOOP$ includes a domain
of generic objects which we call $G\dom O$. Domain $G\dom O$ uses
ground signatures in the construction of its objects. Otherwise, $G\dom O$
is very similar to domain $\dom O$ of $\NOOP$. 

In the next two sections, we formally present generic signatures,
and (a rough sketch for) the formal construction of $\GNOOP$. Construction
of $\GNOOP$ goes along the same lines of constructing $\NOOP$.

As we will see in the next two sections, all features of generics,
and their modeling, are centered around the main crucial idea that
names of signatures are not plain\emph{ }but have some structure,
indirectly implying, thus, that $\GNOOP$ objects could roughly be
viewed as ``$\NOOP$ objects with structured signature names'' (rather
than plain names used in $\NOOP$).

\section{\label{sec:Generic-Signatures}Generic Signatures}

Before giving the formal definitions for generic signatures, we first
present an informal view of generic signatures to help motivate the
later formal definitions.

\subsection{Informal View}

A \emph{signature constructor }is a parametrized (``generified'')
$\NOOP$ signature. A signature constructor carries almost the same
information a $\NOOP$ signature does (\emph{i.e.}, a name, supersignatures,
member signatures, ... etc). Like a $\NOOP$ signature, a signature
constructor\emph{ }is a syntactic typing entity. 

Each signature constructor has a \emph{signature constructor name}
that is a plain label. Signature constructor names\emph{ }have a very
similar purpose in $\GNOOP$ to that of signature names in $\NOOP$
signatures. Given its dependency on the use of names (of signature
constructors), generics is also a feature of nominal OO programming
that pivotally depends on nominality.

The main crucial difference between a signature constructor and a
$\NOOP$ signature is that, inside a signature constructor, some generic
signature names inside the signature constructor might be ``missing''
and have instead \emph{type variables} as place holders (\emph{i.e.},
names are abstracted over by the type variables)\footnote{For the sake of familiarity, we keep using the name `type variables'
for these variables that abstract over signature names, although,
strictly speaking, they should be called `ground signature name variables
(See below).}, and, further, structured generic signature names have to be used
inside signature constructors (rather than plain unstructured signature
names).

Following the name of a signature constructor, %
as an extra component in of a signature, all type variables that might
be used inside the signature constructor are declared (in a sequence
of type variables) ahead of the other components of the signature
constructor (no nesting of type variable declarations occurs except
at the top level, akin to a logical `prenex form'). The sequence
of type variables is ordered (hence, a sequence) but allows no repetitions
(\emph{i.e.}, all type variables of a signature constructor are distinct/have
distinct names). Signature constructors are thus binding constructs.
As such, two signature constructors are considered equivalent if they
are equal modulo the consistent renaming of type variables throughout
the two constructors (``alpha-renaming'').

No type variable can be used/referenced inside a signature constructor
without it being declared in the type variables\emph{ }component of
the constructor. Like a function expression in $\lambda$-calculus,
the signature constructor, via its declared type variables, is said
to \emph{abstract over} the ground signature names that type variables,
at usage-sites, can be instantiated to. Due to the fact that type
variables cannot be ``applied'' (to other type names), a more accurate,
and simpler, mathematical view of signature constructors is that they
are signature \emph{schemes}, rather than functions. We adopt this
simpler view. We will see, in Section~\ref{gen-sig-substitution},
how this view (together with type variables being distinct, and being
declared in prenex form) helps us define a very simple notion of \emph{name
substitution} on signatures. All usage and properties of generic signatures
in generic OOP depend on this simple notion of substitution.

Zeroary signature constructors, \emph{i.e.}, ones which take no type
parameters, play the role of providing the base case for almost all
inductive definitions or proofs that involve generic signatures. Given
it takes no type parameters (i.e., has an empty sequence of type variables),
and thus has no missing information, a zeroary signature constructor,
when its (empty) type variables component is dropped, corresponds
directly to a non-generic $\NOOP$ signature. Vice versa, by adding
an empty sequence of type variables to a $\NOOP$ signature it directly
corresponds to a zeroary signature constructor. (In summary, thus,
each zeroary $\GNOOP$ signature constructor corresponds to a $\NOOP$
signature, and each $\NOOP$ signature corresponds to a zeroary $\GNOOP$
signature constructor. There is a one-to-one correspondence between
the two entities). Thus, in $\GNOOP$, as in~\cite{FJ/FGJ}, ordinary
non-generified (\emph{i.e.}, non-generic) signature names are identified
with zero-ary signature constructor names. This is done merely as
a technical convenience.

A \emph{signature constructor environment }is a finite map (a table)
that maps signature constructor names to signature constructors. Like
all functions/mappings, by applying a signature constructor environment
to a signature constructor name, the corresponding signature constructor
(the one with the input name) is obtained.

To construct a \emph{generic signature name}, for use either inside
the supersignatures component, or the member signatures component
of a signature constructor, a signature constructor name\footnote{Whose corresponding signature constructor is then said to be `instantiated',
\emph{i.e.}, to be made an instance of.} is paired with a (possibly empty) sequence of `type variables, or
(nested) generic signatures names'. This sequence is called the sequence
of \emph{type arguments }that are ``passed'' (as parameters) to
the signature constructor name to construct a generic signature name.

Relative to a certain signature constructor environment, a generic
signature name is \emph{well-formed} if all signature constructor
names inside it are paired with sequences of type arguments that are
the same length as the type variables components of the signature
constructors in the environment corresponding to the signature constructor
names\footnote{That is, for a well-formed generic signature name, the number of type
arguments must be equal to the declared number of type parameters.
The declared number is based on the signature constructor information
inside a signature constructor environment.}.

Generic signature names can be represented as labeled abstract syntax
trees. The signature constructor name used to construct the generic
signature name is the label for the root node of the tree. The subtrees
of this node are trees that each represent an element of the type
arguments sequence. Given that type variables cannot be applied, nodes
representing type variables will always be \emph{leaf} nodes in the
abstract syntax tree representation of a generic signature name.

Inside a signature constructor, `a type variable or a generic signature
name' (usually just called a ``type'', or, more accurately, a ``type
name'') can appear as the signature name of a field (in a field signature),
and as the signature name of a method parameter or the signature name
of a method return value (in a method signature). A type name can
also appear as a supersignature name. To prevent the possibility of
having circular subsigning hierarchies, a plain (also called ``naked'')
type variable is not allowed to appear in the supersignatures component
of a signature constructor\footnote{In~\cite{AllenFirst-ClassApproach03}, it is discussed how MixGen
in fact uses naked type variables in the supersignatures component
of a signature to define signatures of ``mixins'' (as OO \emph{components})
on top of first-class OO generics. In the context of mixins, there
are in fact easily enforced rules that prevent circular nominal subtyping
hierarchies. \cite{AllenFirst-ClassApproach03}~presents an algorithm
that MixGen uses to detect, and prevent, such circularity. Since we
aim to model generics without mixins (\emph{i.e.}, to model ``regular
generics'') in this report, dealing with such a use of naked type
variables is unnecessary.}.

In the context of a certain signature constructor environment, well-formed
generic signature names that have no type variables (are ``variable-free'')
are of special status, and they deserve some special attention: They
are constructed only out of signature constructor names, nested inside
each other via the type arguments component of a generic signature
name. A generic signature name that contains no type variables (\emph{i.e.},
has no missing information) is called a \emph{ground signature name}.
Inductively, a ground signature name is, thus, constructed from a
signature constructor name of a signature constructor that takes no
type parameters (a zeroary signature constructor) paired to the empty
sequence (of type names, as its type arguments), or is constructed
from a signature constructor name of a (generic, non-zeroary) signature
constructor paired to a sequence all of whose members are themselves
ground signature names (as the type arguments of the signature constructor
name).

Formally, we will see below that ground signature names are finite
syntactic entities, and are members of an inductively defined set.
As a subset of generic signature names, ground signature names also
can be represented by labeled trees. Due to the absence of type variables,
all leaves of a tree representation of a ground signature name are
labeled by names of zeroary signature constructors. In fact, given
that we noted above that type variables cannot occur elsewhere in
the tree representing a generic signature name except as leaves, all
nodes of the tree representing a ground signature name are labeled
by signature constructors names only.

Relative to a specific signature constructor environment, well-formed
generic signature names that are not ground signature names are called
\emph{non-ground signature names}. They are generic signature names
that do contain type variables. They can only appear inside signature
constructors (in member signatures, or as supersignatures).

Similar to $\NOOP$ object signatures, a \emph{generic object signature
}is a pair of a ground signature name and a signature constructor
environment. Non-ground signature names are cannot be used to form
object signatures.

A signature constructor whose (1) name is substituted by a ground
signature name, that (2) has its type variables component removed,
and (3) whose type variables are replaced (substituted) by ground
signature names consistently throughout the signature constructor
(\emph{i.e.}, throughout its supersignatures, field signatures, and
method signatures components) is called a \emph{ground signature}.
Recalling that signature constructors are schemes, ground signatures
are \emph{instances} of signature constructors. Ground signatures
have no type variables. Every zeroary signature can be trivially made
into a ground signature.

The instantiation of a signature constructor, with ground signature
names as type arguments, defines a ground signature, whose name is
the ground signature name constructed from the signature constructor
name paired with the sequence of type arguments.

Constructing ground signatures via instantiation of signature constructors
is defined using a simple notion of \emph{name substitution}. Name
substitution substitutes (replaces) type variables inside a signature
constructor with ground signature names.

We can easily see, thus, that in the context of a particular signature
constructor environment, a ground signature name $ggsn$ whose first
component, say $nm$, is the name of a signature constructor $sc$
defines (and is also the name of) a ground signature, say $gs$. $gs$
can be obtained from $sc$ and $ggsn$ by substituting the type variables
$\overline{V}$ inside $sc$ by the type arguments of $ggsn$ (the
second component of $ggsn$, say $\overline{TN}$). When we present
formal definitions below, we denote this substitution/instantiation
operation that defines the ground signature $gs$ corresponding to
$ggsn$ by $gs=\{\overline{V}\mapsto\overline{TN}\}sc$, meaning that,
if $sc$ has the same signature constructor name as $ggsn$, and $\overline{TN}$
are the type arguments of $ggsn$, then $gs$ is defined by substituting
the type variables $\overline{V}$ inside $sc$ by the type arguments
$\overline{TN}$ of $ggsn$.

Name substitution plays a very important role in our mathematical
(\emph{i.e.}, domain-theoretic) modeling of nominal OOP generics.

Given the use of ground signature names to define and name ground
signatures, we should note that non-ground signature names do not
name any actual signature names. Instantiation, and name substitution
are only defined for ground signature names. Although, technically
speaking, \emph{generic signatures} could be defined (as a general
notion that embodies signature constructors as well as ground signatures),
they are not needed much in practice\footnote{See Section~\ref{sub:Well-formed-Generic-Signatures} for one such
important (theoretical/meta-linguistic) need, where, for checking
for well-formedness of a signature constructor in a signature constructor
environment, we need to make sure \emph{all} instantiations of signature
constructor satisfy a specific condition related to the corresponding
instantiations of another\emph{ }signature constructor. This condition
involves checking an infinite number of instantiations, which is generally
not possible, unless we can abstract over all these instantiations
(using type variables, which is indeed possible in this case).}. This is why the instantiations of signature constructors that could
define these generic signatures are disallowed.

Because non-ground signature names are names ``with some missing
information'', in instantiation and name substitution it has to always
be guaranteed that whenever used, non-ground signature names (which
may appear inside signature constructors) are used to construct ground
signature names first, which are then the names used to refer to any
ground signatures, and that when constructing ground signature names
(to define ground signatures) non-ground signature names are never
passed as type arguments \textquotedblleft as is\textquotedblright .

In light of the informal definitions above, we can easily see that
generic OOP depends on names (\emph{i.e.}, nominality) even more than
non-generic OOP.

To summarize the above, for generics we have ten name-dependent definitions.
These are:
\begin{enumerate}
\item Signature constructors (which have names), signature constructor names,
and signature constructor environments (mappings from names to constructors),
\item Type variables (which, incidentally, are also mere names, but ones
not related to nominality of OOP),
\item Generic signature names (which are signature constructor names paired
with sequences of type names),
\item Type names (which are type variables or generic signature names. Note
the circular dependency on generic signature names),
\item Name substitution (which instantiates a generic signature name to
a ground signature name, and a signature constructor to a ground signature),
\item Ground signature names, and ground signatures (which result from instantiation/name
substitution), and
\item Generic object signatures (which are ground signature names paired
with signature constructor environments).
\end{enumerate}
Dependency on names (and nominality) allows named entities to be referenced
and used before they are fully-defined (\emph{i.e.}, they allow circularity/mutual-recursiveness).
It should be noted that generic signature entities have even a heavier
dose of circularity than non-generic ones\footnote{Note, for example, that some of the definitions above mutually-recursively
depend on each other. We have signature constructors (which have signature
constructor names) using generic signature names, which, in turn,
recursively use signature constructor names (a counterpart to this
sort circularity exists for non-generic signatures, via the use of
signature names). Also, generic signature names use type names, which,
in turn, are type variables or, recursively, generic signature names
(Due to the absence of type variables, this sort of circularity has
no counterpart in non-generic signatures)}.

Below we formally present all generic OOP definitions we informally
discussed above.

\subsection{Signature Constructors}

Formally, corresponding to signature equations for $\NOOP$, %
for $\GNOOP$ signature constructors we have
\begin{eqnarray}
\mathsf{SC} & = & \mathsf{N}\times\mathsf{X}^{*}\times\mathsf{GN}^{*}\times(\mathsf{L}\times\mathsf{GNX})^{*}\times(\mathsf{L}\times\mathsf{GNX}^{*}\times\mathsf{GNX})^{*}\label{eq:sig-cons}
\end{eqnarray}
where $\mathsf{SC}$ is the set of signature constructors, $\mathsf{X}$
is a non-empty set of type variables\footnote{Or, more accurately, type variable names, or most accurately, ground
signature name variable names (given that ``type'' variables actually
get instantiated only to ground signature names).} as plain names, and where sets $\mathsf{N}$, $\mathsf{T}$, $\mathsf{L}$
and the set constructors $\times$ and $^{*}$ have the same meaning
as in $\NOOP$ signature equations%
.

$\mathsf{X}^{*}$ is the set of (finite) sequences of type variables.
As a component inside a signature constructor, a member of $\mathsf{X}^{*}$
is the sequence of type variables whose members can be used inside
this signature constructor. Ordering of elements in an element (a
sequence) of $\mathsf{X}^{*}$ does matter (type arguments are matched
with type variables based on the \emph{order} of each in their respective
sequences). Repetitions is not allowed allowed in elements of $\mathsf{X}^{*}$.
Similar to all sequences, elements of a sequence $\overline{V}\in\mathsf{X}^{*}$
can be referenced by their indices, \emph{e.g.}, $V_{i}$, where $V_{i}\in\mathsf{X}$.
$\#$ is a function whose value is the length of a sequence (\emph{e.g.},
the expression $\#(\overline{V})$ gives the size of $\overline{V}$).
Given that repetition is not allowed, a sequence can also be viewed
as a function from its members to their indices (natural numbers).
Thus, for example, for $Y\in\mathsf{X}$, $\overline{V}(Y)$ gives
the index of type variable $Y$ in the sequence $\overline{V}$ of
type variables\footnote{Thus, if we have a sequence of signature names, say $\overline{TN}$,
assuming $\overline{V}$ is a sequence of type variables of the same
length as $\overline{TN}$, the expression $\overline{TN}_{\overline{V}(Y)}$
gives the signature name in $\overline{TN}$ corresponding to a type
variable $Y$ in $\overline{V}$. We use this notation below to define
name substitution.}. As a function, $\overline{V}$ is undefined for indices equal to
or larger than its size, nor for variable names that do not exist
in its range.

For generic signature names, we formally have
\begin{equation}
\mathsf{GN}=\mathsf{N}\times\mathsf{GNX}{}^{*}\label{eq:gen-sig-name}
\end{equation}
where, for `a generic signature name or a type variable' (which,
inaccurately, is sometimes also called a \emph{type}, or a \emph{type
name}), we have 
\begin{equation}
\mathsf{GNX}=\mathsf{GN}+\mathsf{X}\label{eq:gen-sig-name-or-type-var}
\end{equation}
(Note the mutual dependency between $\mathsf{GN}$ and $\mathsf{GNX}$,
and that only members of $\mathsf{N}$, not $\mathsf{X}$, can be
paired with members of $\mathsf{GNX}{}^{*}$).

\subsubsection{BNF Rules for Generic Signatures}

Similar to BNF rules for $\NOOP$ signatures, the BNF rules (which
allow us to name components of generic signatures) corresponding to
the definitions above are:

\begin{tabular}{|lllll|}
\hline 
\multicolumn{1}{|l}{\textsf{sc::= }\code{(}\textsf{nm\code{,}$\mathsf{[\overline{X}]}$\code{,}
$\mathsf{[\overline{gnm}]}$\code{,} $\mathsf{[\overline{gfs}]}$\code{,}
$\mathsf{[\overline{gms}]}$\code{)}}} &  &  &  & \multicolumn{1}{l|}{signature constructors }\tabularnewline
\multicolumn{1}{|l}{\textsf{gfs::=} \code{(}\textsf{a}\code{,} \textsf{gnmx}\code{)}} &  &  &  & \multicolumn{1}{l|}{generic field signatures}\tabularnewline
\multicolumn{1}{|l}{\textsf{gms::= \code{\textsf{(}}b\code{,}\code{$\mathsf{[\overline{gnmx}]\rightarrow gnmx}$}\code{\textsf{)}}}} &  &  &  & \multicolumn{1}{l|}{generic method signatures}\tabularnewline
\multicolumn{1}{|l}{\textsf{gnm::= \code{\textsf{(}}nm\code{,}\code{$\mathsf{[\overline{gnmx}]}$}\code{\textsf{)}}}} &  &  &  & \multicolumn{1}{l|}{generic signature names}\tabularnewline
\multirow{2}{*}{\textsf{gnmx::= \code{\textsf{(}}0\code{,}\code{$\mathsf{X}$}\code{\textsf{)}}|\code{\textsf{(}}1\code{,}\code{$\mathsf{gnm}$}\code{\textsf{)}}}} &  &  &  & type variables or generic\tabularnewline
 &  &  &  & signature names\tabularnewline
\hline 
\end{tabular}

For members of \textsf{gnmx} when they are in printed form, the pairing
with 0 and 1 is usually unnecessary, given that type variables and
generic signature names are usually syntactically distinguishable.\footnote{Note that even if sets $\mathsf{X}$ and $\mathsf{N}$ where the same
set (\emph{e.g.}, plain strings), if a non-algebraic notation (\emph{e.g.},
`\code{(x,y)}' for pairing) is used for expressing generic signature
names (as is the case in most mainstream generic OO languages), type
variables (members of $\mathsf{X}$) are \emph{syntactically} distinguishable,
using a context-free grammar, from generic signature names (members
of $\mathsf{GN}$). That is because type variables are not paired
with a sequence of type arguments, since they cannot be ``applied''
and thus do not take type arguments (signature constructors are, thus,
not ``higher-order''), and also because, strictly speaking, signature
constructors must be applied to a sequence of type arguments (even
zeroary signature constructors have to be applied to the empty sequence
of type arguments). Raw types (see Section~\ref{sub:Raw-Types})
and/or making the empty sequence of type arguments optional affects
the easy distinction between the two entities, but the distinction
is still possible using context-dependent information (\emph{e.g.},
when a signature constructor is used in a signature constructor environment,
which provides the names of \emph{all} signature constructors that
could be referenced inside a signature constructor, and by, possibly,
giving either $\mathsf{N}$ or $\mathsf{X}$ a higher priority {[}Java,
for example, gives higher priority to set $\mathsf{X}${]}).} The pairing is, thus, usually elided when a member of \textsf{gnmx}
is spelled out in this thesis.

\subsection{Ground Signature Names}

A proper\emph{ }subset of generic signature names (\emph{i.e.}, a
subset of set $\mathsf{GN}$) that is of special interest (see Section~\ref{sub:Ground-Signatures}
for more details) is one whose members are generic signature names
that have no occurrences of type variables inside them. We call (members
of) this subset \emph{ground signature names}. Ground signature names
could, more simply, be viewed as ``structured signature names'',
as opposed to the \emph{plain} signature names used in constructing
$\NOOP$.

Thus, formally, for the set $\mathsf{GGN}$ of ground signature names,
we have
\begin{equation}
\mathsf{GGN}=\mathsf{N}\times\mathsf{GGN}^{*}\label{eq:ground-sig-name}
\end{equation}
(where the empty sequence of ground signature names provides the base
case for the definition).

Each member of $\mathsf{GGN}$ is a member of $\mathsf{GN}$, but
not necessarily vice versa (Members of the difference set, $\mathsf{GN}\backslash\mathsf{GGN}$,
are the non-ground signature names, which necessarily contain type
variable occurrences).\emph{}

The BNF rule corresponding to the definition of ground signature names
is:

\begin{tabular}{|ccccc|}
\hline 
\textsf{ggnm::= \code{\textsf{(}}nm\code{,}\code{$\mathsf{[\overline{ggnm}]}$}\code{\textsf{)}}} &  &  &  & ground signature names\tabularnewline
\hline 
\end{tabular}

\subsection{\label{sub:Sig-Con-Env-Gen-Obj-Sigs}Signature Constructor Environments
and Generic Object Signatures}

Similar to $\NOOP$ signature environments, signature constructor
environments (SCEs, members of a set $\mathsf{SCE}$, for short) are
finite functions from $\mathsf{N}$ (signature constructor names)
to $\mathsf{SC}$ (signature constructors), \emph{i.e}., SCEs are
particular subsets of $\mathsf{N}\times\mathsf{SC}$\footnote{Incidentally, we might have used $\multimap$ (the finite records
constructor) to function also as a \emph{set} constructor (making
it ignore any ordering and repetitions of members of its input sets
{[}the defining sets{]} and output sets {[}the defined sets{]}). If
so, then we would have $\mathsf{SCE}=\mathsf{N}\multimap\mathsf{SC}$.
We prefer, in this thesis, however, to limit the use of $\multimap$
to the construction of semantic domains only.}.

Within the context of a particular signature constructor environment
$\mathsf{sce}$, a generic signature name (including ground signature
names) $\mathsf{gnm=(nm,}\mathsf{[\overline{gnmx}]})$ is \emph{well-formed}
if and only if (1) $\mathsf{sce}(\mathsf{nm})$ is defined ($\mathsf{nm}$
is in the domain of $\mathsf{sce}$), (2) $\#(tvars(sce(\mathsf{nm})))=\#(\overline{\mathsf{gnmx}})$,
and (3) all nested generic signature names in $\overline{\mathsf{gnmx}}$
are also well-formed in $\mathsf{sce}$, where $tvars$ is a projection
function that extracts the type variables component of a signature
constructor.

Similar to $\NOOP$ object signatures, generic object signatures\emph{
}are pairs of a ground signature name $\mathsf{ggnm}$ and a well-formed
(See Section~\ref{sub:Well-formed-Generic-Signatures}) signature
constructor environment $\mathsf{sce}$ (where $\mathsf{ggnm}$, by
definition, is well-formed in $\mathsf{sce}$). Generic object signatures\emph{
}are, thus, well-formed members of $\mathsf{GGN}\times\mathsf{SCE}$.
Note that ground\emph{ }signature names, not the more general generic
signature names, are used to define generic object signatures.

Similar to a $\NOOP$ signature, the informal ``meaning'' of a generic
object signature is that its first component, the ground signature
name, should be ``interpreted'' in the context of its second component,
the signature constructor environment (In particular, signature constructor
names referenced by the ground signature name are meant to refer to
names of signature constructors in the signature constructor environment).

\subsection{\label{sub:Ground-Signatures}Ground Signatures}

Ground signatures provide a connection (and a ``middle ground'')
between $\NOOP$ signatures and $\GNOOP$ signatures. Ground signatures
are signature constructors that (1) use ground signature names as
their names\footnote{That is, their names are members of $\mathsf{GGN}$, rather than members
of $\mathsf{N}$.}, (2) have no type variables component, and (3) have type variables
occurrences inside supersignatures, and member signatures substituted
with ground signature names. Ground signatures could, equivalently,
be viewed as $\NOOP$ signatures with structured\emph{ }signature
names.

The equivalence of the two views of ground signatures is why they
are considered a ``middle ground'' between (non-generic) $\NOOP$
signatures and (generic) $\GNOOP$ signatures.

Formally, we have $\mathsf{GGS}$, the set of ground signatures, defined
as 
\begin{equation}
\mathsf{GGS}=\mathsf{GGN}\times\mathsf{GGN}^{*}\times(\mathsf{L}\times\mathsf{GGN})^{*}\times(\mathsf{L}\times\mathsf{GGN}^{*}\times\mathsf{GGN})^{*}\label{eq:ground-sigs}
\end{equation}
which is the same\emph{ }as the $\NOOP$ signature equation%
{} but uses ground signature names (members of $\mathsf{GGN}$) in place
of plain signature names (members of $\mathsf{N}$), implying names
of ground signatures have an inherent, desirable structure, and are
not plain names as those of $\NOOP$ signatures.

The fact that ground signatures provide a middle ground (expressed
mathematically in the equations for $\mathsf{S}$, $\mathsf{SC}$
and $\mathsf{GGS}$) is behind the statement that ``generics saves
developers typing {[}as in writing{]} time'' (which has the implication
of decreasing the chances for coding errors, as well), and that ``generics
does not let developers do something fundamentally different than
what they were able to do without\emph{ }generics (but with significantly
more code written)''\footnote{However, as a refactoring technique, generics has the implication
of decreasing the size of the code that OO developers would have to
maintain if they were to ``encode'' generics into a non-generic
OO language, \emph{e.g.}, using the so-called ``generics idiom''.}.\footnote{Given that sets $\mathsf{N}$ and $\mathsf{GGN}$ have the same cardinality,
$\aleph_{0}$, a one-to-one correspondence (mapping) between the two
sets can be used to define a one-to-one correspondence between ground
signatures and (non-generic) signatures (of $\NOOP$), and, accordingly,
to define a mapping from ground object signatures (see below) to (non-generic)
object signatures. An example of functions that map ground signature
names to non-generic signatures names, and vice versa, are the so-called
``name mangling'' functions, such as the one used in the NextGen
implementation of Java Generics (See~\cite{AllenEfficient02}).}

The BNF rule corresponding to the definition above is:

\begin{tabular}{|lllll|}
\hline 
\multicolumn{1}{|l}{\textsf{ggs::= }\code{(}\textsf{ggnm\code{,} $\mathsf{[\overline{ggnm}]}$\code{,}
$\mathsf{[\overline{ggfs}]}$\code{,} $\mathsf{[\overline{ggms}]}$\code{)}}} &  &  &  & \multicolumn{1}{l|}{ground signatures }\tabularnewline
\multicolumn{1}{|l}{\textsf{ggfs::=} \code{(}\textsf{a}\code{,} \textsf{ggnm}\code{)}} &  &  &  & \multicolumn{1}{l|}{ground field signatures}\tabularnewline
\multicolumn{1}{|l}{\textsf{ggms::= \code{\textsf{(}}b\code{,}\code{$\mathsf{[\overline{ggnm}]\rightarrow ggnm}$}\code{\textsf{)}}}} &  &  &  & \multicolumn{1}{l|}{ground method signatures}\tabularnewline
\hline 
\end{tabular}

\subsubsection{Are Ground Signature Environments and Ground Object Signatures Needed?}

Given the definition of ground signatures, as the result of instantiating
the type variables inside a signature constructor, it may be tempting
to extend this connection between generic and non-generic signatures
to consider defining what could be called `ground signature environments',
which could, it may be thought, also provide a ``middle ground''
between signature constructor environments and $\NOOP$ signature
environments, and could also be, it may be further thought, ``$\NOOP$
signature environments with structured signature names''.

While defining more ground signature entities is possible mathematically,
we should note that a hypothetical ground signature environment corresponding
to a particular signature constructor environment could be an \emph{infinite}
function (from ground signature names to ground signatures), unlike
a $\NOOP$ signature environment, which must be a finite function%
.

An example that demonstrates that a signature constructor environment
can get translated to an infinite ground signature environment is
the case where inside the signature constructor environment there
is a signature constructor that has inside it a generic signature
name in which the main signature constructor name is used to construct
type arguments that are then passed again to the main signature constructor
name (thus making rewriting/name substitution \emph{expansive}. See~\cite{GEB}).
Trying to instantiate this signature constructor to produce a ground
signature, and, recursively, have ground signature names inside it
produce ground signatures that should be in the co-domain of the ground
signature environment will cause an infinite number of different instantiations,
and thus the domain and co-domain (or, range) of the ground signature
environment will also be infinite sets.\footnote{A sample valid Java code that demonstrates a case where rewriting,
if done, would be infinite, is the following code:
\begin{lyxcode}
class~C<T>~\{

~~C<C<T>\textcompwordmark{}>~m()\{~return~new~C<C<T>\textcompwordmark{}>();~\}~//~note~the~expansive~return~type

\}
\end{lyxcode}
This example is similar to the \emph{recursive generics }example in~\cite{AllenEfficient02}
(called \emph{recursive polymorphism} in that paper), but this example,
unlike the one in~\cite{AllenEfficient02}, demonstrates the use
of an expansive type in the \emph{return type} of a method, rather
than only inside its body (if used only inside the body of a method,
a recursively polymorphic type may not necessarily affect the typing
(\emph{i.e.}, signature) of the method).}

In any one generic nominal OOP program, up to any (finite) moment
in time, however, only a \emph{finite} number of instantiations of
a signature constructor could take place, and thus the whole set (of
ground signatures) resulting from all possible instantiations is not
needed (or is ``lazily'' needed). This situation is much like an
inductively defined data type (of, say, lists, or natural numbers),
which has a finite set of data constructors that are then used to
construct values of that data type\footnote{In our case, \emph{i.e.}, for the ``data type'' of ground signature
names, names of zeroary signature constructors provide the base elements
of the data type, while the names of non-zeroary signature constructors
will be used (with the more basic elements) to inductively construct
the compound elements of the data type.}. In fact, we could explicitly formalize ground signature names, relative
to a fixed finite subset of $\mathsf{N}$, as being elements of such
a data type. Given its irrelevance to the goal of modeling generics,
however, except for this brief note we do not present any further
discussion of such a possibility in this thesis.

`Ground object signatures' could also be defined, by pairing ground
signature names and (the undefined) ground signature environments.
Given the impracticality of the latter, the former become impractical
too, so we do not define them here.

\subsubsection{The Import of Ground Signatures and Their Names}

The reason behind defining ground signature names is that they are
the names paired with signature constructor environments (SCEs) to
construct generic object signatures. Generic object signatures, in
turn, are paired with object records to construct the domain of (generic)
objects, $\gdom O$, in $\GNOOP$ (See Section~\ref{sec:GNOOP}).

By using a more structured name for object signatures\emph{ }(\emph{i.e.},
ground signature names), when compared to non-generic OOP, generics
could be viewed as basically offering a ``structuring'' of the namespace
of signature names, and them providing access to a (finitely-constructed)
infinite set of these signatures. Signature constructors, via generic
signature names, then, enable abstracting over subsets of these ground
signatures (subsets whose members share common signature constructor
names). Signature constructors could, thus, be considered as functions
from ground signature names to ground signatures, or, given their
syntactic nature, as \emph{signature schemes} rather than functions
as we pointed out earlier. A signature constructor is an ``abbreviation''
for (abstracts over) an infinite set of structurally-similar ground
signatures. By instantiating a signature constructor, we obtain an
instance of the set of ground signatures that the signature constructor
abbreviates (is a \emph{scheme} for).

Thus, by having structured signature names, generics adds expressiveness
to an OOP language  by allowing \emph{multiple} (ground) signatures
with the same signature constructor name to be usable, in a type safe
manner, in a single OO program.

\subsubsection[Name Substitution and Constructing Ground Signatures]{\label{sub:Name-Substitution}Constructing Ground Signatures: Using
Name Substitution to Instantiate Signature Constructors }

This section presents the formal definition of the name substitution
function. As explained above, name substitution uses ground signature
names to map generic signature names to ground signature names, by
replacing type variable occurrences inside the generic signature names
with ground signature names, and, in turn, mapping a signature constructor
to a ground signatures (\emph{i.e.}, to an instance of the set of
ground signatures that the signature constructor abstracts over).

Given two special properties of how type variables are used in generics,
generic name substitution has a simple definition (unlike the definition
of substitution for $\beta$\nobreakdash-reduction in $\lambda$\nobreakdash-calculus,
for example). These two special properties are:
\begin{enumerate}
\item Type variables are declared all at once, at the ``outer level'',
with nesting of type variables not being allowed except at that level
(type variables later in a sequence of type variables are in the scope
of type variables declared earlier in the sequence).
\item Type variables are required to have distinct names, and thus variable
name clashes (and the use of shadowing, or overriding, or so) between
type variables names are not possible (This property, together with
the first property, allows substitution to be defined in a straightforward
manner without concern for ``capturing'' or shadowing).
\end{enumerate}
Formally, we use the following notation to denote the operation of
substituting type variables $\overline{V}$ in a signature constructor
$sc$ with type arguments $\overline{TN}$
\[
\{\overline{V}\mapsto\overline{TN}\}sc
\]
We first define name substitution on generic signature names, then
extend this definition to define name substitution for signature constructors.
For type variables $\overline{V}\in\mathsf{X^{*}}$ , and $\overline{TN}\in\mathsf{GGN}$,
we have
\[
\{\overline{V}\mapsto\overline{TN}\}Y=\begin{cases}
\overline{TN}_{\overline{V}(Y)} & Y\in\overline{V}\\
Y & Y\notin\overline{V}
\end{cases}
\]
(In fact, the second case, $Y\notin\overline{V}$, should not occur.
The substitution in this case should be undefined \footnote{And an error should be reported, meaning, for example, that the name
substitution is not the result of a proper signature constructor instantiation,
or that the signature constructor is not well-formed.}). For a generic signature name $gnmx=(nm,[\overline{gnmx}])$, we
have
\[
\{\overline{V}\mapsto\overline{TN}\}gnmx=(nm,[\overline{\{\overline{V}\mapsto\overline{TN}\}gnmx}])
\]
\label{gen-sig-substitution}This definition extends naturally to
signature constructors as follows:

For a signature constructor $sc=(nm,[\overline{V}],ssn,gfss,gmss])$,
we have
\begin{eqnarray*}
\{\overline{V}\mapsto\overline{TN}\}sc & = & (nm,[\overline{V}],\{\overline{V}\mapsto\overline{TN}\}ssn,\{\overline{V}\mapsto\overline{TN}\}gfss,\{\overline{V}\mapsto\overline{TN}\}gmss)
\end{eqnarray*}
(note that we require the sequence $\overline{V}$ of the substituted
variables to be the same as the type variables of $sc$, and also
note that the substitution can be performed without need for a signature
environment)\\
where for a generic member (field or method) signature $ms=(c,gcs)$,
$\{\overline{V}\mapsto\overline{TN}\}ms=(c,\{\overline{V}\mapsto\overline{TN}\}gcs)$,
and where for a name substitution on sequences of elements $[\overline{e}]$,
of size $n$, we have
\[
\{\overline{V}\mapsto\overline{TN}\}[\overline{e}]=[\{\overline{V}\mapsto\overline{TN}\}e_{0},\cdots,\{\overline{V}\mapsto\overline{TN}\}e_{i},\cdots,\{\overline{V}\mapsto\overline{TN}\}e_{n-1}]
\]
\global\long\def\gtog{GenToGnd}
For instantiating a signature constructor to define a ground signature,
we can thus, finally, now define
\[
GenToGnd(sc,\overline{TN})=drop\_tvars(\{tvars(sc)\mapsto\overline{TN}\}sc)
\]
where $tvars$ returns the third (type variables) component of a signature
constructor, and $drop\_tvars$ is a function that takes a sextuplet
and returns a quintuplet without the third component of the input
(drops it), and, by extension to generic object signatures, for a
generic object signature $gos=(ggnm,sce)$ we define a ground signature
$ggs$ where
\begin{eqnarray*}
\gtog(gos) & = & \gtog(sce(fst(ggnm)),snd(ggnm))
\end{eqnarray*}
where $fst$ and $snd$ return the first and second component of a
pair, respectively (and thus for the ground signature name $ggnm$
they return the signature constructor name used to construct $ggnm$,
and the type arguments to that constructor name, respectively).
\begin{claim}
It is relatively clear that for a generic object signature $gos$,
$\gtog(gos)$ is a ground signature with ground signature name $fst(gos)$.
\end{claim}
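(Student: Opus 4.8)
\emph{Proof strategy.} The plan is to unfold the definition of $\gtog$ on a generic object signature, carry the name substitution through the body of the signature constructor it names, and then read off that the outcome has exactly the shape demanded by Equation~\eqref{eq:ground-sigs}, with name component $fst(gos)$. Write $gos=(ggnm,sce)$, so that $fst(gos)=ggnm$; write $ggnm=(nm,\overline{TN})$, so that $fst(ggnm)=nm$ and $snd(ggnm)=\overline{TN}$; and put $sc=sce(nm)$, say $sc=(nm,[\overline{V}],ssn,gfss,gmss)$ with $\overline{V}=tvars(sc)$. Since $gos$ is a generic object signature, $sce$ is a well-formed SCE and $ggnm$ is well-formed in $sce$; hence $nm\in\mathrm{dom}(sce)$ (so $sc$ exists), $\#(\overline{V})=\#(\overline{TN})$, and every entry of $\overline{TN}$ is well-formed in $sce$. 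By definition $\gtog(gos)=\gtog(sc,\overline{TN})=drop\_tvars(\{\overline{V}\mapsto\overline{TN}\}sc)$.

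First I would record the two facts that make the substitution well behaved. (i) Every entry of $\overline{TN}$ is a \emph{ground} signature name: indeed $ggnm=(nm,\overline{TN})\in\mathsf{GGN}$, so by Equation~\eqref{eq:ground-sig-name} we get $\overline{TN}\in\mathsf{GGN}^{*}$. (ii) Since $\overline{V}$ lists distinct variables and has the same length as $\overline{TN}$, the map $\{\overline{V}\mapsto\overline{TN}\}$ sends each $Y\in\overline{V}$ to the ground signature name $\overline{TN}_{\overline{V}(Y)}\in\mathsf{GGN}$, and it is (meant to be) applied nowhere else, because a well-formed signature constructor declares in $\overline{V}$ every type variable that occurs inside it.

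The sole substantive step is a structural induction: for every $gnmx\in\mathsf{GNX}$ all of whose type-variable occurrences lie in $\overline{V}$, the result $\{\overline{V}\mapsto\overline{TN}\}gnmx$ lies in $\mathsf{GGN}$. If $gnmx=Y$ is a type variable this is fact~(ii); if $gnmx=(nm',[\overline{gnmx}])$, name substitution acts componentwise by its defining clause on generic signature names, each $\{\overline{V}\mapsto\overline{TN}\}gnmx_{i}$ lies in $\mathsf{GGN}$ by the induction hypothesis, and therefore $(nm',[\overline{\{\overline{V}\mapsto\overline{TN}\}gnmx}])\in\mathsf{N}\times\mathsf{GGN}^{*}=\mathsf{GGN}$ again by Equation~\eqref{eq:ground-sig-name}; finiteness of generic signature names (they are finite trees) makes the recursion well-founded. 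Applying this at every type-name occurrence inside $ssn$, $gfss$, $gmss$ --- legitimate because $sc$ is well-formed, so all those occurrences have their variables in $\overline{V}$, and in particular no naked type variable sits in $ssn$ --- turns $\{\overline{V}\mapsto\overline{TN}\}sc$ into an entity whose supersignatures component lies in $\mathsf{GGN}^{*}$, whose field-signatures component lies in $(\mathsf{L}\times\mathsf{GGN})^{*}$, and whose method-signatures component lies in $(\mathsf{L}\times\mathsf{GGN}^{*}\times\mathsf{GGN})^{*}$, while leaving the name and the soon-removed type-variable component untouched. Then $drop\_tvars$ discards the now-vacuous type-variable component, and --- following clause~(1) of the informal definition of a ground signature, i.e., replacing the plain name by the ground signature name $ggnm=(nm,\overline{TN})$ --- what remains is exactly an element of the set in Equation~\eqref{eq:ground-sigs} whose first (name) component is $ggnm=fst(gos)$, as claimed.

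I expect the only real friction to be cosmetic: as literally written, $\gtog(sc,\overline{TN})=drop\_tvars(\{\overline{V}\mapsto\overline{TN}\}sc)$ carries the \emph{plain} name $fst(sc)\in\mathsf{N}$ through unchanged, whereas the defining equation for $\mathsf{GGS}$ requires a name in $\mathsf{GGN}$; so the proof must fold into $\gtog$ (or into $drop\_tvars$) the name-promotion prescribed by clause~(1) of the informal definition of ground signatures, in order for the statement to typecheck and for $fst(\gtog(gos))=ggnm$ to hold on the nose. Once that convention is pinned down, everything else is the bookkeeping above, whose entire mathematical content is the one-line induction: substituting ground signature names for \emph{all} declared type variables of a well-formed signature constructor leaves no type variable behind, hence yields ground signature names throughout and a genuine ground signature.
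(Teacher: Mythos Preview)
Your proposal is correct. The paper offers no proof of this claim at all---it is stated as ``relatively clear'' and left at that---so your argument is not so much an alternative as a careful elaboration of what the paper takes for granted: unfold $\gtog$, push the substitution through by structural induction on $\mathsf{GNX}$, and read off membership in $\mathsf{GGS}$. You also correctly flag the one genuine wrinkle the paper glosses over, namely that $drop\_tvars$ as literally defined leaves the plain name $nm\in\mathsf{N}$ in place, whereas a ground signature needs a name in $\mathsf{GGN}$; the informal description of ground signatures (clause~(1)) does say the name is ``substituted by a ground signature name,'' so your fix---folding that promotion into $\gtog$---is exactly the intended reading.
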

For a generic object signature $gos$, one may also try defining a
ground signature environment that, if paired with the ground signature
name of $gos$, would define a ground object signature. This definition
would have to produce infinite ground signature environments for some
generic object signatures, \emph{i.e.}, ones containing signature
constructors with `recursive generics' in their signature constructor
component.

It should be noted that the $\gtog$ function loses some information,
and is thus not a reversible function. More concretely, this means
that while a generic object signature can be mapped to a unique ground
signature, it is not possible to reverse the work of $\gtog$ by mapping
the resulting ground signature back to a unique generic object signature,
and thus it cannot be mapped back to the same generic object signature
that produced it. Intuitively, this is impossible because there are
multiple ways to ``generify'' a non-generic ground signature (\emph{i.e.},
many ways to make a non-generic signature be generic). Because of
such an information loss, $\gtog$ is called an \emph{erasure} function.

\subsection{\label{sub:Well-formed-Generic-Signatures}Well-formed Generic Signatures}

Similar to what we have for $\NOOP$, only \emph{well-formed} generic
signatures are used in constructing $\GNOOP$. Well-formed generic
signatures satisfy seven constraints.

(1-4) The first four well-formedness constraints for $\NOOP$%
{} are almost the same for generic signature constructors and signature
constructor environments, with the only difference being that names
(members of set $\mathsf{N}$) now refer to signature constructors
not to signatures.

(5) We also require, for a signature constructor to be well-formed,
that all generic signature names inside it to be well-formed (See
Section~\ref{sub:Sig-Con-Env-Gen-Obj-Sigs}), and for type variables
inside the signature constructor to be members of its type variables
component.

(6) The generic counterpart of the (No cycles in supersignatures)
condition%
{} requires that when the supersignatures hierarchy of a signature constructor
is recursively followed, via signature constructor names inside the
signature constructors, and via the bindings of these in the signature
environment, each signature constructor in the co-domain of a well-formed
signature constructor environment has to have no cycles in its list
of supersignatures (\emph{i.e.}, it has no `infinite ascending chains'
of ``super'' signature constructors). The names of the supersignature
constructor names are obtained from the first component (the ``head'')
of the generic supersignature names.

(7) The generic counterpart of the (Signatures inherit signatures
of members in supersignatures) condition%
{} is the most interesting well-formedness condition. Like for $\NOOP$,
members in well-formed signature constructors that are inherited from
(have the same name in) supersignature constructors are required to
have the \emph{same} signature they have in the supersignature constructors
when the supersignature constructors are instantiated with the (generic)
signature names declared in the subsignature. This condition can
be checked by making a name substitution (See Section~\ref{sub:Name-Substitution})
on the supersignature constructor using the \emph{generic} signature
name parameters used in the supersignatures component of the subsignature
constructor. The signatures of members of the generic signature resulting
from the substitution will be checked for being exact matches with
signature of the corresponding members in the subsignature constructor.
By using a \emph{generic} signature name (which abstracts over all
possible instantiation of the signature construction) in the substitution,
it is made sure that \emph{all} instantiations of the subsignature
constructor satisfy the ``member signatures matching'' condition
of the corresponding instantiation of the supersignature constructor.
If signatures of all members of a signature constructor match the
signatures of members the constructor shares with all its supersignature
constructors, the signature constructor is well-formed.

\subsection{Equality of Generic Signatures and Extension of SCEs}

Equality of well-formed generic signature entities is defined not
much differently from the definition of equality of $\NOOP$ signatures,
except that the exact names of type variables in signature constructors
are irrelevant, and thus equality is defined modulo ``alpha-renaming''
of type variables. Otherwise, equality is defined component-wise using
equality of components of generic signatures as syntactic elements
(just like for $\NOOP$ signatures).

Given that well-formed signature constructor environments are ``minimal'',
by well-formedness conditions, extensions of signature constructor
environments are defined exactly in a similar manner to their definition
for $\NOOP$ signature environments. A signature constructor environment
$sce_{1}$ extends a signature constructor environment $sce_{2}$
if, modulo alpha-renaming of type variables, all members of $sce_{1}$
(name/signature constructor pairs) are members of $sce_{1}$ (and
thus $sce_{1}$ is a superset of $sce_{2}$).

\subsection{\label{sub:Generic-Inheritance-and-Subsigning}Inheritance and Subsigning
of Generic Signatures}

In a signature constructor environment $\mathsf{sce}$, a signature
constructor $\mathsf{sc}$ has a supersignatures component, $\mathsf{ss}$,
that specifies for each instantiation of $\mathsf{sc}$, which defines
a ground signature $\mathsf{gss}$, what ground signatures are the
ground supersignatures of that instantiation. Ground supersignatures
are decided by the supersignatures component of $\mathsf{gss}$.
When the substitution is, transitively, done throughout the supersignatures
hierarchy (which is of finite size, by well-formedness rules), all
ground supersignatures of $\mathsf{gss}$ can be obtained. In agreement
with nominal subtyping, a pair of ground signatures $\mathsf{(gss_{1}},\mathsf{gss_{2}})$
is in the subsigning relation if and only if the ground signature
name of $\mathsf{gss_{2}}$ is a member of the set of names of the
ground supersignatures of $\mathsf{gss_{1}}$.

The supersignatures component of a signature constructor is thus considered
to define a \emph{pointwise }subsigning relation, between ground signature
instantiations of the signature constructor and instantiations of
its explicitly declared supersignature constructors..

Thus, for two well-formed ground object signatures $gos_{1}=(ggnm_{1},sce_{1})$
and $gos_{2}=(ggnm_{2},sce_{2})$, we have
\begin{equation}
gos_{2}\subsign gos_{1}\Leftrightarrow sce_{2}\ext sce_{1}\wedge(ggnm_{2}=ggnm_{1}\vee ggnm_{1}\in gss(sce_{2},ggnm_{2}))\label{eq:generic-subsigning}
\end{equation}
where $gss$ is a projection function that computes the set of\emph{
}all\emph{ }ground supersignatures names of a given ground signature
name ($ggnm_{2}$) in a given signature constructor environment ($sce_{2}$)
by, transitively, going throughout the supersignature hierarchy. $gss$
uses name substitution to obtain these ground supersignatures names.

\section{\label{sec:GNOOP}$\protect\GNOOP$: A Domain Theoretic Model of
Generic OOP}

Construction of $\GNOOP$ mostly resembles the construction of $\NOOP$.
The main difference between generic objects of $\GNOOP$ and non-generic
objects of $\NOOP$ is the use of generic object signatures in constructing
generic objects in place of the (non-generic) object signatures used
in constructing $\NOOP$ objects.

Given its great similarity to $\NOOP$ construction, we only skim
through the construction of $\GNOOP$ (in the next two subsections).
We then discuss the properties of $\GNOOP$ in Section~\ref{sub:Properties-of-GNOOP}.

\subsection{$\protect\GNOOP$ Domain Equation}

The domain equation used to construct $\GNOOP$ is the same as the
$\NOOP$ domain equation, with a domain $\gdom O$, of generic objects,
replacing domain $\dom O$, and using the (flat) domain of generic
object signatures, $\mathcal{S_{\gdom O}}$, in place of domain $\mathcal{S_{O}}$
of $\NOOP$ object signatures. The domain $\mathcal{L}$ and domain
constructors $\times$, $\multimap$, $\rightarrow$, $^{*}$ retain
their same meanings.
\begin{equation}
\gdom O=\mathcal{S_{\gdom O}}\times(\mathcal{L}\multimap\gdom O)\times(\mathcal{L}\multimap(\gdom O^{*}\rightarrow\gdom O))\label{eq:GNOOP-Domain-Equation}
\end{equation}
Similar to the construction of $\COOP$%
~\cite{NOOP,NOOPbook,DomThSummCOOP14}, a model that represents the
`core' of $\GNOOP$ could be constructed, using the domain equation
\begin{equation}
\gdom O=\mathcal{L}\multimap(\gdom O\rightarrow\gdom O)+\mathcal{B}\label{eq:core-GNOOP}
\end{equation}
but the model constructed using this domain equation would be precisely
the same as $\COOP$, because $\COOP$ uses no signatures, and so
does Equation~\eqref{eq:core-GNOOP}. Thus, like it does for $\NOOP$,
$\COOP$ functions as a simple, structural core of $\GNOOP$ as well.

\subsection{$\protect\GNOOP$ Construction}

The construction method of $\GNOOP$ is the same method used for constructing
$\NOOP$, where the construction proceeds in iterations, guided by
the structure of the right-hand-side of domain equation~\eqref{eq:GNOOP-Domain-Equation}
(which has precisely the same structure as that of the right-hand-side
of the domain equation%
{} of $\NOOP$).

Similar to filtering of pre\nobreakdash-$\NOOP$ to define $\NOOP$%
, $\GNOOP$ filtering is responsible for ensuring generic object records
are matched with concrete generic object signatures where member signatures
in the generic object signatures match those in the object records
(note that generic object signatures define ground signatures, whose
member signatures are used to decide the matching).

\subsection{\label{sub:Properties-of-GNOOP}Properties of $\protect\GNOOP$}

Similar to our investigation of some main properties of $\NOOP$%
, and using terminology used there, we also interpret generic object
signatures of $\GNOOP$ as denoting nominal object types, where we
will have

\begin{equation}
\mathbb{GS}[gos]=\{go\in\gdom O|\exists goss\in\dom{S_{\gdom O}},gr\in\gdom R.(goss\subsign gos)\wedge(go\sqsubseteq(goss,gr))\}\label{eq:gnd-sig-semantics}
\end{equation}
where $\gdom R$ is the generic object records counterpart of domain
$\dom R$ of $\NOOP$ object records.

Similar to what we did for $\NOOP$%
, it is easy, using essentially the proof we developed for $\NOOP$
(only slightly appropriately adjusted for $\GNOOP$), to prove that
generic nominal types denoted by generic object signatures are weak
ideals in domain $\gdom O$.

Similar to what we did for $\NOOP$%
, based on the definition of the interpretation of a $\GNOOP$ generic
object signatures and the definition of the interpretation of a $\NOOP$
non-generic signature being very similar, we also use essentially
the same proof we developed for $\NOOP$ to prove that in $\GNOOP$
the inheritance and generic nominal OO subtyping are completely reconcile
(\emph{i.e.}, that adding generics  to a non-generic nominal OO language
preserves the reconciliation).

Formally, thus, similar to the `inheritance is subtyping' theorem
for $\NOOP$%
, for $\GNOOP$ we also have that
\begin{equation}
gos_{1}\subsign gos_{2}\Leftrightarrow\mathbb{GS}(gos_{1})\subseteq\mathbb{GS}(gos_{2})\label{eq:g-inh=00003Dg-subt}
\end{equation}

That is, one fully instantiated class type is a subtype of another
fully instantiated type only if its fully instantiated signature subsigns
the former (the $\Leftarrow$ direction in~\ref{eq:g-inh=00003Dg-subt}).
If subsigning holds, then the instantiated supertype signature name
must appear in the chain of instantiated supertypes for the subtype.
Then signature matching implies that the weak ideal of the subtype
is contained in the weak ideal of the supertype (the $\Rightarrow$
direction in~\ref{eq:g-inh=00003Dg-subt}).

\section{\label{sec:Modeling-More-Generics}Modeling More Generic OOP Features}

Although the modeling and analysis of more features of generic OOP
is mostly considered future work and beyond the scope of this thesis,
we here present a necessarily-incomplete basis for such future work
and we present some main ideas we intend to build on and develop in
the near future.

First, given that Java implements a somewhat-limited notion of generics
on a virtual machine that has no notion of generics (the JVM~\cite{JVM2}),
using a technique and concept called \emph{erasure}\footnote{Generics, as implemented in Java is sometimes, thus, called ``second-class
generics''.}, we would like to make a full assessment of this technique, by building
a model for \emph{erased generics}. Building on concepts we presented
in earlier sections of this report, we present in Section~\ref{sub:Erasure}
the basis on which we can build such a model of erased generics\emph{
}(which is called $\EGNOOP$). We will see that erased generics (generics
implemented via erasure) provides an approximation\footnote{Not in a domain-theoretic sense!}
for full generics, since it basically `saves programmers from using
the ``generic idiom'' but does not give them the full power of generics',
including support for generic type-dependent runtime operations (such
as creating objects of type variable types {[}\emph{i.e.}, type specified
using naked type variables{]}).

Next, in Section~\ref{sub:Bounded-Type-Variables-F-bounded-Generics},
we will discuss how to incorporate bounding type variables, so that
rather than allowing a signature constructor to be instantiated using
any ground signature name (which has been the case so far), some subtyping
constraints about possible type argument instantiations is made on
type variables, and these constraints are guaranteed to be satisfied
for passed-in type arguments.

`Raw types' is not only an artifact of erasure, but also arise
in Java from the need for supporting ``migration compatibility'',
where erased and non-erased code need to co-exist. We discuss raw
types in Section~\ref{sub:Raw-Types}.

Wildcards, %
{} as a main motivation behind developing $\NOOP$ and $\GNOOP$, try
to increase the expressiveness of generics. Wildcards lessen the
mismatch between subtyping polymorphism and generics, by allowing
unknown types (with, or without known upper and lower bounds) to be
passed as type arguments to signature constructors. Using generics
terminology of Section~\ref{sec:Generic-Signatures}, wildcards allow
the naming and the referencing of more ground signatures. Wildcards
are discussed in Section~\ref{sub:Wildcards}.

In Section~\ref{sec:Polymorphic-Methods}, we briefly discuss how
polymorphic methods could be modeled. Even though not a must, polymorphic
methods are supported in virtually all OO languages that support generics.
Generics allows abstracting full objects (\emph{i.e.}, all their members)
over type arguments, while polymorphic methods allow abstracting individual
methods of objects over type arguments.

While the discussions in the sections on modeling erasure and polymorphic
methods are more technical, the discussions are less technical in
the rest of the sections on modeling other generic OOP features.

We start the discussion of more generic OOP features by discussing
erasure.

\subsection[Erased Generic OOP]{\label{sub:Erasure}Erasure and Erased Generic OOP (The ``Generics
Idiom'')}

Erasure as a technique using in some mainstream OO languages, most
notably Java, for supporting generics when first-class generics is
not possible to support. Erased generics thus only approximates full
generics (when information is lost due to erasure, an operation the
depends on the erased information is disallowed, or a warning is produced).
Erasure produces code that a non-generic OO developer ``would have
written (using the `generic idiom')'' if they had no generics.
This is the essential intuition behind erased generics in Java. The
definitions accurately capture this intuition.

The generics idiom, and erased generics accordingly, crucially depend
on the existence of a type (a signature constructor) at the top of
the subtyping hierarchy (typically \code{Object}).\footnote{This is why Java Generics does not support generics using primitive
types, because Java primitive types of Java are \emph{outside} the
object subtyping hierarchy, and are not subtype of \code{Object}.} The generics idiom, and erased generics, also depend on the observation
that in the full-generics semantics all instantiations of a signature
constructor have the \emph{same} members, only with different member
signatures\footnote{The Java compiler tries to compensate for the difference in members
signatures by using type cast operations that are never guaranteed
to fail.}.

Using our modeling of generics presented in earlier sections, in an
OO language where all object types have a common supertype (and without
type variables having bounds---see §\ref{sub:Bounded-Type-Variables-F-bounded-Generics}),
erasure is equivalent to instantiating a signature constructor with
one type then using the signature constructor name to refer to this
particular instantiation (rather than the ground signature name).

\global\long\def\ggtong{GndToNonGen}
The erasure of a ground signature name $ggnm=(nm,[\overline{ggnm}])$
\[
\ggtong(ggnm)=nm
\]
(That is, the erasure of the ground signature name $ggnm$ is simply
the name of the signature constructor used to construct $ggnm$\footnote{Erasure of $ggnm$ is a reminiscent reminder of the mathematical
\code{ceiling} function, $\left\lceil .\right\rceil $, which computes
the smallest integer greater than or equal to a floating point number.}).

If we abbreviate the $\ggtong$ function to \global\long\def\ggtong#1{\left\lceil #1\right\rceil }
$\ggtong{\cdot}$, then the erasure of a ground signature \textsf{$ggs=(ggnm,\mathsf{[\overline{ggnm}]},\mathsf{[\overline{ggfs}]},\mathsf{[\overline{ggms}]})$}\footnote{Erasure of a ground signature is like first converting the floating
number 3.2 to the floating number 3.0 (corresponding to instantiating
with \code{Object}), then referring to that new floating number using
(the symbol for) the integer 3 (by dropping type arguments of generic
signature names).} is\textsf{ 
\[
\ggtong{ggs}=(\ggtong{ggnm},[\overline{\ggtong{ggnm}}],[\overline{\ggtong{ggfs}}],[\overline{\ggtong{ggms}}])
\]
}For a signature constructor environment $sce=\{\ldots,(nm,sc),\ldots\}$,
its erasure is
\[
\ggtong{sce}=\{(nm,ngs)|(nm,sc)\in sce\wedge ngs=\ggtong{\gtog(sc,\overline{\mathtt{Object}})}\}
\]
(check Section~\ref{sub:Name-Substitution} for the definition of
the $\gtog$ function).

That is, the definition of erasure for signature constructor environments
converts a signature constructor in the environment to a ground signature
by instantiating it with a sequence of type arguments whose elements
all having the same value: namely, signature constructor name \code{Object}
(which we assume is the type at the top of the subtyping hierarchy).

Equivalently, without making a detour via ground signatures, we could
state that $ngs$ is the result of taking out the type variables component
of $sc$, replacing each plain (\emph{i.e.}, naked) type variable
by \code{Object}, and dropping the type arguments of all \emph{generic}
signature names (\emph{i.e.}, we only keep the outer signature constructor
name). Given that plain type variables are not allowed in the supersignatures
component of $sc$, this shows that the supersignatures component
of the result signature will use the\emph{ }same\emph{ }signature
constructors of the supersignatures component of $sc$ (This property
plays an important role in preserving the inheritance and subtyping
properties of the resulting erased-generic entities.)

For a generic object signature $gos=(ggnm,sce)$, its erasure, $egos$
is 
\[
egos=(\ggtong{ggnm},\ggtong{sce})
\]

\begin{thm}
If a $\GNOOP$ generic object signature $gos$ is well-formed, its
erasure, $egos$, is a well-formed non-generic $\NOOP$ object signature.\end{thm}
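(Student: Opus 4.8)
The plan is to unfold the definition of erasure, $egos=(\lceil ggnm\rceil,\lceil sce\rceil)$, to check first that $egos$ has the \emph{shape} of a $\NOOP$ object signature, and then to verify, one constraint at a time, the $\NOOP$ signature well-formedness conditions for $\lceil sce\rceil$, each time reducing the required property to the corresponding $\GNOOP$ well-formedness condition already assumed for $gos$. So the argument is essentially structural book-keeping that tracks what $\lceil\cdot\rceil$ does to names, to supersignature components, and to member signatures; the reason to expect it to go through is the remark already made in the excerpt that, because naked type variables are barred from supersignatures and $\lceil\cdot\rceil$ keeps the outer signature-constructor name of every generic signature name, erasure leaves the supersignature constructors untouched.

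For the shape: $\lceil ggnm\rceil$ is by definition the head signature-constructor name of $ggnm$, hence a plain name in $\mathsf{N}$, which is a legitimate $\NOOP$ signature name. For $\lceil sce\rceil$, each entry is a pair $(nm,ngs)$ with $ngs=\lceil\gtog(sc,\overline{\mathtt{Object}})\rceil$; by (the reasoning of) the Claim above, $\gtog(sc,\overline{\mathtt{Object}})$ is a ground signature, and since $\lceil\cdot\rceil$ sends every ground signature name to its head name in $\mathsf{N}$ and $\gtog$ has already dropped the type-variables component, $ngs$ has exactly the shape of a $\NOOP$ signature (a member of $\mathsf{S}$). Finiteness of $\lceil sce\rceil$ follows from finiteness of $sce$, and $\lceil ggnm\rceil\in\mathrm{dom}(\lceil sce\rceil)$ because $ggnm$ is well-formed in $sce$, so its head lies in $\mathrm{dom}(sce)$, and erasure preserves domains. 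For the $\NOOP$ analogues of the first four (``sanity'') conditions on referenced names, I would observe that $\lceil\cdot\rceil$ keeps every outer signature-constructor name, discards all type arguments, and replaces every naked type variable occurring in a member position by \code{Object}; hence every name referenced inside $\lceil sce\rceil$ is either a name already referenced by $sce$ --- whose binding exists by well-formedness of $sce$ --- or is \code{Object}, which we assume to be present in the environment. Conditions (1)--(4) for $\lceil sce\rceil$ then follow at once, and since $\lceil sce\rceil$ has no type-variables component the $\GNOOP$-specific constraint (5) has nothing to carry over.

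The no-cycles-in-supersignatures condition (the $\NOOP$ counterpart of $\GNOOP$'s condition (6)) uses the restriction that a naked type variable may never occur in the supersignatures component of a signature constructor: consequently every entry of the supersignatures component of each $sc$ is a generic signature name with a genuine head constructor name, and $\lceil\cdot\rceil$ preserves exactly that head. Hence the supersignature-name graph of $\lceil sce\rceil$ coincides with the supersignature-constructor-name graph of $sce$, which is acyclic by condition (6) for $sce$, so $\lceil sce\rceil$ has no cycles either.

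The member-inheritance condition --- members shared by a subsignature and one of its supersignatures must receive the \emph{same} signature in both, the $\NOOP$ counterpart of $\GNOOP$'s condition (7) --- is the main obstacle, and where I expect most of the work to sit. The fact to isolate is that $\lceil\cdot\rceil$ retains only head constructor names, so that for a fixed signature constructor $sc'$ the erased member signatures of $\gtog(sc',\overline{TN})$ do not depend on the ground type arguments $\overline{TN}$, \emph{except} for members whose instantiated type is a bare instance of a type variable of $sc'$. Combining this with $\GNOOP$ condition (7) for $sce$ --- which forces a subsignature constructor $sc_1$ to carry each inherited member at exactly the signature obtained by substituting, into the supersignature constructor $sc_2$, the generic type-name arguments declared for $sc_2$ inside $sc_1$ --- reduces the goal to a substitution/erasure-commutation lemma, roughly $\lceil\{\overline{V_1}\mapsto\overline{\mathtt{Object}}\}\{\overline{V_2}\mapsto\overline{TN}\}e\rceil=\lceil\{\overline{V_2}\mapsto\overline{\mathtt{Object}}\}e\rceil$ for every type name $e$ occurring in a member signature of $sc_2$. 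The generic-signature-name case is immediate (both sides collapse to the head name), and the one delicate case is $e$ a naked type variable of $sc_2$ that $sc_1$ instantiates with a type argument whose own head is not \code{Object} --- there one must reconcile the more specific member type that the erased subsignature carries with the \code{Object}-erased member type carried by the erased supersignature, and this is precisely where the exact conventions of the $\NOOP$/$\GNOOP$ signature model have to be invoked (the role of \code{Object} as the top type, and how member-signature matching relates to the induced subtyping). Once that case is pinned down, the remaining parts of the argument are routine, the shape and the preservation of head names already having been established.
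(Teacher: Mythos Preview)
Your approach is the same in spirit as the paper's---structural verification that erasure carries each component of a well-formed $\GNOOP$ object signature to the corresponding $\NOOP$ component---but you have been considerably more careful. The paper's own proof is a single sentence: ``By induction on the structure of $gos$, and by noticing that $\lceil\cdot\rceil$ produces $\NOOP$ signature names for $\GNOOP$ ground signature names, $\NOOP$ signatures for $\GNOOP$ signature constructors, and $\NOOP$ signature environments for $\GNOOP$ signature constructor environments.'' It does not work through the well-formedness conditions individually, and in particular it says nothing about the member-inheritance condition you flag as the crux.

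You are right to single out condition~(7) as the delicate point, and your worry about it is substantive rather than merely technical. Take a signature constructor $A$ with type variable $X$ and a field $f:X$, and a zeroary constructor $B$ declaring supersignature $A\langle C\rangle$; $\GNOOP$ condition~(7) forces $B$ to carry $f:C$. Under erasure, $\lceil A\rceil$ has $f:\mathtt{Object}$ while $\lceil B\rceil$ has $f:C$ and supersignature $A$. If the $\NOOP$ condition really demands \emph{exact} equality of inherited member signatures---as the paper's phrasing ``the \emph{same} signature'' suggests---then the erased environment fails it, and no amount of ``substitution/erasure commutation'' will repair this, because the two sides of your proposed lemma genuinely differ in that case (one is $C$, the other $\mathtt{Object}$). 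Your instinct to appeal to $\mathtt{Object}$ as top and to ``how member-signature matching relates to the induced subtyping'' is pointing in the right direction---this is exactly the phenomenon that Java handles with bridge methods and that a relaxed (covariant) member-matching condition would absorb---but the paper's stated $\NOOP$ condition~(7) does not, on its face, license that relaxation. So your analysis is more honest than the paper's one-line proof: you have located precisely the obligation that is not discharged, and the paper simply does not address it.
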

\begin{proof}
By induction on the structure of $gos$, and by noticing that $\ggtong{\cdot}$
produces $\NOOP$ signature name for given $\GNOOP$ ground signature
name, produces $\NOOP$ signatures for $\GNOOP$ signature constructors,
and produces $\NOOP$ signature environments for $\GNOOP$ signature
constructor environments.
\end{proof}
A domain of erased-generics objects constructed using these erased-generics
object signatures, based on $\NOOP$ domain equations (or $\GNOOP$
domain equations, for this matter), will produce a domain, $\edom O$,
isomorphic to the domain $\dom O$ of $\NOOP$ of non-generic objects.
It is this domain that is used to interpret generics in Java. The
erasure of generic information (\emph{i.e.}, its unavailability at
runtime) limits the operations a Java developer can perform on generic
types (at compile time) to ones that will be type sound even without
such information (dynamic dispatch is type safe, while \code{new}
and `cast' operations cannot be performed on plain type variables).

\subsubsection{\label{sub:Raw-Types}Java Raw Types}

`Raw types' refers to using non-structured\emph{ }plain signature
constructor names as generic signature names, simultaneously\emph{
}while using (structured) generic signature names. Raw types are an
artifact of Java erasure that are only motivated by attempting to
maintain ``migration compatibility''. Migration compatibility allows
Java developers to simultaneously use generic and non-generic code,
while motivating them to use generics. The Java type system gives
\emph{unmodified }non-generic code a generic interpretation. Raw types
 are best understood as generic signature names that are ``missing
type parametrization information''\footnote{Raw types are much like `unbounded wildcard types, or erased types,
with automatically-inserted type casts that are \emph{not} guaranteed
to succeed' (See Section~\ref{sub:Wildcards} for a brief discussion
of Java wildcards). Just like a Java program with downcasts, or ``stupid
casts'' is not guaranteed to be free of type errors (see~\cite{FJ/FGJ}),
also a Java program with raw types (which are used as generic types)
is not guaranteed (or cannot be decidably guaranteed) to be free of
type errors. A Java program with raw types that are always used as
\emph{non}-generic types, as is the case in pre-generics Java (\emph{i.e.},
Java 4.0-), can be decidably guaranteed to be free of type errors.
Generification helps the type system decide type safety. Generification
can somewhat be viewed as providing the type checker with the ``non-decidable''
portion of the analysis it does while type checking. Otherwise generification
would be an automatable process (Generification is believed to be
an undecidable problem).}. We relegate further analysis of raw types to future work.

\subsection{\label{sub:Bounded-Type-Variables-F-bounded-Generics}Bounded Type
Variables}

For more expressive and more precise typing, type variables in generic
OOP are usually provided with \emph{type bounds.} An upper type bound
on a type variable tells that all instantiations of the type variable
will be guaranteed to be subtypes of the provided bound. A lower type
bound of a type variable tells that the provided bound will be guaranteed
to be a subtype of all instantiations of the type variable. Bounds
on type variables of generic classes make use of the naturally-supported
subtyping relation in OOP. In this section, and thus in this thesis,
we only discusse upper type bounds of generic type variables, and
relegate a discussion of lower type bounds to future work.

An upper type bound of a type variable allows a class designer to
assume that certain members do exist in objects of any valid instantiation
of the type variable, and, because of nominality associating a behavioral
contract with type names, to also assume that all instantiations of
the type variable stick to the behavioral contract associated with
the name of the type bound.\footnote{For example, a class that provides (\emph{i.e.}, whose instances provide)
a generic sorting ``service'', needs to be sure that whatever elements
(objects) it orders do actually support ordering. The type of these
elements will be provided to the ``sorter'' generically, as a type
variable, but will be required, via an upperbound, to provide ``proof''
that its elements (of the type) do support ordering, \emph{e.g.},
by having a \code{compare()} or \code{leq()} (less-than-or-equal)
method.}

Formally, adding upperbounds (upper type bounds) to type variables,
amounts to a small addition to signature constructors, where the type
variables component is not a sequence of type variables (names from
set $\mathsf{X}$), but a sequence of pairs of type variables and
bounds, where bounds are members of the set $\mathsf{GN}$\footnote{Not the set $\mathsf{GNX}$. We do not allow type variables to bound
type variables, but only to appear as type arguments in the bounds.
Even though the consequences of lifting this restriction could be
interesting to investigate, we do not do so here.}.

We thus define
\[
\mathsf{UBX}=\mathsf{X}\times\mathsf{GN}
\]
 and we could now redefine signature constructors
\begin{eqnarray}
\mathsf{SC} & = & \mathsf{N}\times\mathsf{UBX}^{*}\times\mathsf{GN}^{*}\times(\mathsf{L}\times\mathsf{GNX})^{*}\times(\mathsf{L}\times\mathsf{GNX}^{*}\times\mathsf{GNX})^{*}\label{eq:sig-cons-ub}
\end{eqnarray}

Other than this small change, all other definitions we had in Section~\ref{sec:Generic-Signatures}
remain exactly the same\@. Type variable bounds, however, add few
more definitions and concepts.

\subsubsection{Valid Signature Constructor Instantiations and Valid Ground Signature
Names}

The main change done to our modeling of generics by having upper bounds
is that it introduces, on top of the notion of well-formed signature
constructor instantiations (which we have seen in Section~\ref{sec:Generic-Signatures}),
the notion of \emph{valid} signature constructor instantiations. Rather
than using generic object signatures that are only well-formed, valid
generic object signatures (which use valid signature constructor instantiations)
are now used to construct the domain of generic objects $\gdom O$
of $\GNOOP$ to model objects of generic OOP where type variables
have bounds. The addition of type variable bounds thus affects values
in the domain.

A valid signature constructor instantiation is a well-formed generic
object signature whose type arguments do satisfy the upper bound constraints
on the corresponding type variables they instantiate. A type argument
must be a subsign of the upper bound of the corresponding type variable.
This condition cannot be considered a well-formedness condition. As
we have seen in Section~\ref{sub:Well-formed-Generic-Signatures},
and Section~\ref{sub:Generic-Inheritance-and-Subsigning}, subsigning
is defined for well-formed signature constructor instantiations, by
\emph{implicitly} assuming they are instantiated with valid type arguments
(all instantiations were valid at the time, and thus there was no
need for an explicit validity check). Checking validity of an instantiation
depends on the subsigning relation being defined. If validity were
to be considered part of well-formedness then a circularity would
ensue between the definition of subsigning and well-formedness. Luckily,
this circularity can be resolved by introducing the notion of validity,
which is defined \emph{after} the subsigning relation is defined,
which, in turn, is defined after well-formedness is defined. Subsigning
(between two instantiations) is thus defined \emph{assuming} that
the instantiations are valid. A decision made regarding subsigning
does not later change if the either of the instantiations is discovered
to be invalid\footnote{This is the key observation behind not needing to redefine subsigning
for valid instantiations, and then redefine valid instantions and
so on ad infinitum.}.\footnote{This split between well-formedness (as a more syntactic condition)
and validity (as a more semantic condition), however, does introduce
the annoying, but inevitable, notion of ``well-formed yet invalid''
ground signature names. An example from Java is the ground signature
name \code{Enum<Object>}, which is a well-formed instantiation of
the generic \code{Enum} signature, but is invalid because \code{Object}
is not a subsign of the bound on \code{Enum}'s type variable (in
the case of this particular instantiation, the bound on the type variable
of class \code{Enum}, given its declaration in the Java class library,
would , a little bit confusingly, be \code{Enum<Object>} itself.
In Java, \code{Object} is not a subsignature of any other signature
but itself {[}its supersignatures component is empty{]}).}

Within the context of a signature constructor environment, thus, a
valid ground signature name is a well-formed ground signature name
with the additional condition that when the arguments are substituted
(via name substitution) in the upper bounds of the signature constructors
the required subsigning constraints will hold in the subsigning relation.

Formally, in the context of a signature constructor environment $sce$,
$ggn=(nm,[\overline{ggn}])$ if $\overline{ggn}$ are valid ground
signature names\footnote{Note that instantiations of zeroary signature constructors (only one
such ``instantiations'' is possible) are all valid.}, and if and only if when we define
\[
\overline{xub}=tvars(sce(nm))
\]
we have
\[
\overline{ggn}\subsign\{\overline{xub}\mapsto\overline{ggn}\}ub(\overline{xub})
\]
where $tvars$ is a function that projects a signature constructor
to its type variables component (now a member of $\mathsf{UBX}$),
$ub$ returns the second component (the upper bound) of a pair (subsigning
between pairs of ground signature constructor names is extended to
sequences of pairs in the obvious way {[}\emph{i.e.}, by conjunction{]}).
Note the use of $\overline{ggn}$ on both sides of the $\subsign$
relation, where it is used as the substitutant on the r.h.s of $\subsign$.

To construct a domain corresponding to these new instantiations, the
only change in the construction is to redefine domain of signatures
used to construct $\GNOOP$ to be the flat domain of valid generic
object signatures (which use valid ground signature names).

Given that we allow type variables to appear as arguments in their
bounds, substituting type arguments in their bounds then checking
if those same arguments subsign these instantiated bounds, allows
for a form of generics called \emph{F-bounded generics} (corresponding
to the notion of F-bounded polymorphism, of structural OOP). F-bounded
generics, arguably, provides more flexibility to generic nominal OOP
by allowing better typing of so-called ``binary methods''. Even
though we have many interesting ideas related to them, we relegate
further analysis of F-bounded generics, of so-called self-referential
F-bounded generics (like in class \code{Enum} of Java), and of the
generics-based solution to the problem of binary methods to future
work.

\subsubsection{Type Variable Bounds and Erasure}

Upper type bounds on type variables offer a chance to do a little
better regarding the information lost when interpreting generics using
erasure (See Section~\ref{sub:Erasure}). Due to the possible use,
inside a class, of the members of the bound, having a better erasure
precludes the use of ``guaranteed-to-not-fail'' type casts to make
such accesses (method calls, or field accesses).

In this case , because it is more precise (more informative) than
type \code{Object}, the erasure of the upper bound of a type variable
is itself used, rather than type \code{Object}, as the erasure of
the type variable when the type variable occurs naked inside a signature
constructor.

\subsection{\label{sub:Wildcards}Java Wildcards and Variance Annotations}

`Variance annotations' is a feature of OOP generics that helps resolve
the mismatch between subtyping polymorphism and generics, and to combine
the power of the two type-abstraction features.

Because of supporting mutation (which we do not model in $\NOOP$
nor $\GNOOP$), a type like \code{LinkedList<Integer>} and \code{LinkedList<Number>}
are unrelated by subtyping, even though type \code{Integer} is a
subtype of type \code{Number}. Variance annotations is a feature
that tries to resolve this mismatch. There are two styles for specifying
variance annotations: declaration-site variance annotations, and usage-site
variance annotations. Declaration-site variance annotations attach
``annotations'' to type variables of generic classes that limit
how type variables can be used in the class (eg, as type of arguments
to methods, or as return types of methods). Declaration-site variance
annotations are used in languages such as Scala and C\#. Usage-site
variance annotations attach annotations to type arguments of generic
class instantiations that limit how type variables used in the signature
of methods of instantiations of a generic class are viewed outside
(See~\cite{Igarashi02onvariance-based} for a more detailed discussion
of variance annotations). Java wildcards are the Java incarnation
of \emph{usage-site} variance annotations (See~\cite{Torgersen2004}). 

Given the dependence of variance annotations on the nominal subtyping
relation of OO languages, any accurate analysis of OO variance annotations
should depend on having a precise model of nominal subtyping. Thus,
having $\GNOOP$ should help better reasoning about wildcards.

In this section we do not dig into modeling wildcards, but having
$\GNOOP$ at hand enables us to consider more precisely what could
be involved if a full model of wildcards is developed.

Our most critical observation is that with wildcards (or variance
annotations, more generally), infinite chains of supertypes could
now occur in the subtyping relation of signature constructor instantiations.
Given the declaration of class \code{Enum} in Java, and with a typical
class declaration that has the heading \code{class C extends Enum<C>},
consider, for example, the chain of Java types \code{C}, which has
\code{Enum<C>} as a supertype (given the declaration of class \code{C}),
which, by subtyping rules for wildcard types, has supertype \code{Enum<? extends C>},
which in turn has \code{Enum<? extends Enum<C>\textcompwordmark{}>}
as a supertype, which has \code{Enum<? extends Enum<? extends C>\textcompwordmark{}>\textcompwordmark{}>}
as a supertype, which has \code{Enum<? extends Enum<? extends Enum<C>\textcompwordmark{}>\textcompwordmark{}>\textcompwordmark{}>}
as a supertype, ... and so on.

It is our opinion that having infinite supertype chains makes the
semantics of full wildcards problematic, pending a detailed development
of an elaborate model of wildcards. We do not see, however, a simpler
way to handle full wildcards.

\section{\label{sec:Polymorphic-Methods}Polymorphic Methods}

Polymorphic methods are usually associated with generics, but, despite
some similarity, polymorphic methods are not an artifact of generics
(generic classes) per se. Type variables of a generic class are shared
by all methods. Polymorphic methods provide method-specific type variables.
The declared type variables of a polymorphic method can be used in
the signature and the code of that polymorphic method only. Polymorphic
methods give more power and expressiveness to OO type systems.

Just like signature constructor type variables, method type variables
allow a method signature to be abstracted over a set of signature
names, allowing the same method to be used with different signature
name instantiations.

In this section we give few core formal definitions for formally modeling
polymorphic methods.

For signature constructors, we add an extra method type variables
component $\mathsf{Y}^{*}$ to sixth component of signature constructors
(\emph{i.e.}, their method signatures component), and we use a separate
set, $\mathsf{GNY}$, for types that can be used in method signatures.
\begin{eqnarray*}
\mathsf{SC} & = & \mathsf{N}\times\mathsf{X}^{*}\times\mathsf{GN}^{*}\times(\mathsf{L}\times\mathsf{GNX})^{*}\times(\mathsf{L}\times\mathsf{Y}^{*}\times\mathsf{GNY}^{*}\times\mathsf{GNY})^{*}
\end{eqnarray*}
\[
\mathsf{GN}=\mathsf{N}\times\mathsf{GNX}{}^{*}
\]
\[
\mathsf{GNX}=\mathsf{GN}+\mathsf{X}
\]
where for method signatures, we have
\[
\mathsf{GNM}=\mathsf{N}\times\mathsf{GNY}{}^{*}
\]
\[
\mathsf{GNY}=\mathsf{GNM}+\mathsf{X}+\mathsf{Y}
\]
(as expected, method type variables cannot be used except in the types
of the method that declares them, not in other parts of the signature
constructor, whether that is other methods, fields, or the supersignatures
of the signature constructor). Note that because of the $\mathsf{X}$
variant of $\mathsf{GNY}$ the type variables of a signature constructor
can be used inside a method signature.

We leave the rest of research work needed to model polymorphic methods
to future work. We expect it not to be too difficult since polymorphic
methods merely involve an additional level of schematic abstraction
and application.

\section{Concluding Remarks}

$\NOOP$ enables proper mathematical reasoning about mainstream OO
languages. Developing $\GNOOP$ based on $\NOOP$ has made `understanding
generics' an ``application'' demonstrating the utility of $\NOOP$
(\emph{i.e.}, of modeling nominal typing in particular). Reasoning
about generics is not possible using earlier models of OOP. Because
of them being structurally-typed and structurally-subtyped, earlier
models of OOP, in fact, do not allow us to evaluate any interesting
properties of nominal OO languages because the typing systems are
fundamentally incompatible.

Even without using $\GNOOP$ to reason about many generic OOP features,
one immediate value of developing it is it dispelling a misconception
that by supporting generics the type system of Java and similar languages
is `a complex hybrid of nominal and structural type systems' (See~\cite[p.254]{TAPL}).
Looking at $\GNOOP$, it is clear that all OO typing features of Java,
generic ones and ones unrelated to generics, are purely nominal. We
believe the misconception that Java has become `a hybrid' may have
been due to confusing the `structure' that generics does add to
signature/type names (which is indeed important in $\GNOOP$) and
the `structure' of objects and classes (\emph{i.e.}, the set of
object members and their signatures) which is the relevant structure
that structural type systems refer to and are interested in.

\section{Approach Extension}

The author is also currently considering extending the above approach
to generics and simplifying it. The extended approach is based on
what is tentatively called `nominal intervals' and `single-nested
generics'. A nominal interval is a type variable name with both upper
and lower bounds. In this approach, basically every type (including
wildcard types) written inside a class is ``captured'' (not certain
yet if that's the right term) into a synthetic type variable (a nominal
interval) of the class. For single-nested generics, because of the
capturing in synthetic type variables, a type argument (like any type
inside the class now) can always be a type variable (a nominal interval),
not a generic type. In this simplified approach, class names/type
names/signature constructor names will appear \emph{only} in bounds
of type variables (interval bounds).

For this approach, the equations for generic signatures will be along
lines of the following ($\mathsf{Y}$ is now the set of synthetic
and original type variable names):

\[
\mathsf{GN}=\mathsf{N}\times\mathsf{Y}{}^{*}
\]
\[
\mathsf{GNY}=\mathsf{GN}+\mathsf{Y}
\]
(Note single-nesting of generic types. No mutual dependency between
$\mathsf{GN}$ and $\mathsf{GNY}$ as that between $\mathsf{GN}$
and $\mathsf{GNX}$ above)

\[
\mathsf{YB}=\mathsf{Y}\times\mathsf{GN}\times\mathsf{GN}
\]
(Note nominal intervals. All type variables will have lower and upper
bounds, in addition to a name. Can be extended to $\mathsf{YB}=\mathsf{Y}\times\mathsf{GNY}\times\mathsf{GNY}$?)

\begin{eqnarray*}
\mathsf{SC} & = & \mathsf{N}\times\mathsf{YB}^{*}\times\mathsf{GN}^{*}\times(\mathsf{L}\times\mathsf{Y})^{*}\times(\mathsf{L}\times\mathsf{Y}^{*}\times\mathsf{Y})^{*}
\end{eqnarray*}
(Note: all types of fields and methods are now replaced by references
to type variables/nominal intervals).

We expect this approach to allow better modeling of Java wildcards
and Java erasure.

\bibliographystyle{plain}

\end{document}